\documentclass[12pt]{amsart}
\usepackage{amsmath,amssymb,amsthm,latexsym,graphicx,cite}
\newtheorem{theorem}{Theorem}

\newtheorem{remark}[theorem]{Remark}
\newtheorem{corollary}[theorem]{Corollary}

\newtheorem{proposition}[theorem]{Proposition}
\def\C{\Bbb{C}}
\def\R{\Bbb{R}}
\def\Z{\Bbb{Z}}
\def\T{\Bbb{T}}

\usepackage{lineno}
%\linenumbers
\title{Dispersion relations and spectra of periodically perforated structures}
\author{Peter Kuchment}
\address{Mathematics Department, Texas A\&M University. College Station, TX 77843-3368, USA}
\email{kuchment@math.tamu.edu}
\author{Jari Taskinen}
\address{Department of Mathematics and Statistics, University of Helsinki}
\email{jari.taskinen@helsinki.fi}
\dedicatory{Dedicated to Professor Shmuel Agmon, a great mathematician and inspiration}
\thanks{P.K. acknowledges support of the NSF DMS grants \# 1517938 and \# 2007408. J.~T. was partially supported by a research grant from the Faculty of Science of the University of Helsinki\\
2010 MSC classification: 35P, 35Q40, 47F99, 81Q10}
\begin{document}
\begin{abstract}
We establish absolute continuity of the spectrum of a periodic Schr\"odiner operator in $\R^n$ with periodic perforations. We also prove analytic dependece of the dispersion relation on the shape of the perforation.
\end{abstract}
\date{\today}
\maketitle

%%%%%%%%%%%%%%%%%%
\section*{Introduction}
%%%%%%%%%%%%%%%%%%
The theory of periodic partial differential equations is at least a century old (see, e.g., \cite{Bril,KuchBAMS,KuchBook,RS,Skri}), but is still widely active, mostly due to its importance for various areas of mathematical physics, such as solid state physics, photonic crystal theory, topological insulators theory, and nano-science, to name just a few (see \cite{AM,Bernevig,Bril,Bush,Do,DoKuch,FW,Joan,Korot,KuchBAMS,KucPBG,KucVainb,KucPost}. Periodicity is usually introduced by crystalline structure of materials, or its optical analogs. Among the topics being discussed one can mention as examples analytic structure of the corresponding dispersion relations and spectral structure of periodic operators. Another option is to study media with periodically modulated shapes, or multi-periodic perforations, see e.g. \cite{card,ferrar,TItal,NRT} and references therein. The latter is the topic we address here. We will be concentrating on the following two questions: dependence of dispersion relations (and thus spectra) on the shape of perforation and absolute continuity of the spectrum. The former has been considered for instance for the case of circular perforations with respect to a varying radius in $\R^2$ \cite{TItal}. The latter is a well known, much studied, but still not completely finished topic (see \cite{KuchBAMS} for references and discussion), when periodicity arises due to periodic coefficients, rather periodic perforations.

The goal of this article is to show that a combination of several known for other situations approaches and results enable one to establish with ease some very general properties of interest for periodically perforated domains in any dimension.

The text is structured as follows: Section \ref{S:perturb} briefly refers to some powerful (and not always that well known) techniques of domain perturbations \cite{CH,Garab,Henry,IvKaKr}, following \cite{Henry}. An important for the further discussions Theorem \ref{T:domainanalit} on analytic dependence on domain perturbations is established. The next Section \ref{S:perf} describes the perforated geometry of interest. Theorem \ref{T:disp} of Section \ref{S:Disp} establishes a very general analyticity with respect to shape variations result. 
Section \ref{S:ac} contains the proof of Theorem \ref{T:ac} on absolute continuity of spectra of such structures (i.e., impossibility of creating a bound state by periodic perforations), allowing also for presence of periodic electric potential. Section \ref{S:remarks} contains some additional remarks. It is followed by the Acknowledgments section.

%%%%%%%%%%%%%%
\section{Domain perturbations}\label{S:perturb}
%%%%%%%%%%%%%%%%%
Let $\Omega\subset\R^n$ be a smooth bounded domain and $l(x,z,D)$ be a linear elliptic partial differential expression of order $m$ with ``sufficiently nice'' coefficients\footnote{``Sufficiently nice'' means that the only property that we need is that the boundary value problem produces a Fredholm operator from the Sobolev space $H^m$ in $\Omega$ with the corresponding boundary conditions to $L_2(\Omega)$.} defined in a neighborhood of the closure of $\Omega$, where we use the standard PDE notation $D$ for $\dfrac{1}{i}\dfrac{\partial}{\partial x}$. The coefficients are allowed to depend analytically on a parameter $z$ in a domain $\mathcal{C}\subset \C^l$ for some integer $l\geq 1$, or a complex analytic space, or even domain in a complex Banach space $E$. We assume that boundary conditions $Bu|_\Gamma=0$ are imposed on $\Gamma:=\partial \Omega$ that lead to an elliptic boundary value problem for the operator $L(z)$ acting as $l(x,z,D)$ in $\Omega$. The coefficients of the boundary operators are also allowed to depend analytically on $z$.

Let us denote by $H^m_z$ the closed subspace of $H^m(\Omega)$ consisting of all functions satisfying the boundary conditions $B(z)u|_\Gamma=0$.

The following claims are standard:
\begin{proposition}\label{P:bundle}\indent
\begin{enumerate}
\item There exists, locally in $z$, a projector $P(z):H^2(\Omega)\mapsto H^2_z$ analytically dependent on $z$, and thus its range forms an analytic subbundle\footnote{If the domain $\mathcal{C}$ is holomorphically convex (or is an abstracts complex Stein space), then an analytic projector exists globally in $z$ (see \cite{Shubin,ZKKP}), but this is not needed for our results.}
    \begin{equation}
    \mathcal{F}:=\bigsqcup\limits_{z\in\mathcal{C}} H^m_z
    \end{equation}
in the trivial bundle $\mathcal{C}\times H^m(\Omega)$ over $\mathcal{C}$.
\item The operator $L(z)$ produces a Fredholm morphism between bundles $\mathcal{F}$ and $\mathcal{C}\times L_2(\Omega)$.
\item The ``dispersion relation''
\begin{equation}
\mathcal{D}:=\{(\lambda,z)\in \C\times\mathcal{C}\,|\, L(z)u=\lambda u \mbox{ has a non-zero solution}\}
\end{equation}
is an analytic subset in $\C\times\mathcal{C}$. It is principal (i.e., is defined as the set of zeros of a \underline{single} analytic function $f(\lambda,z)$) if the Fredholm index of the operator is equal to zero.
\end{enumerate}
\end{proposition}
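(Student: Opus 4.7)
The plan is to dispatch the three claims in order, each becoming essentially routine once the right reduction is set up. For the existence of an analytic projector (claim 1), I would work locally near an arbitrary point $z_0\in\mathcal{C}$. Fix a closed complement $V$ of $H^m_{z_0}=\ker B(z_0)$ in $H^m(\Omega)$ (the orthogonal complement will do, as $H^m_{z_0}$ is closed). Ellipticity guarantees that the trace map $B(z_0):H^m(\Omega)\to T$ is surjective onto the appropriate trace space $T$, so $B(z_0)|_V:V\to T$ is a topological isomorphism. Analyticity of $B(z)$ in $z$ then keeps $B(z)|_V$ invertible on a neighborhood of $z_0$, with $(B(z)|_V)^{-1}$ analytic in $z$, and
\begin{equation*}
P(z):=I-(B(z)|_V)^{-1}B(z)
\end{equation*}
is an analytic family of projectors of $H^m(\Omega)$ onto $H^m_z=\ker B(z)$, which locally trivializes $\mathcal{F}$.

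With this trivialization the Fredholm-morphism claim is almost immediate: the restriction $\phi_z:=P(z)|_{H^m_{z_0}}:H^m_{z_0}\to H^m_z$ is a local isomorphism, and $L(z)\circ\phi_z$ is then an analytic family of operators between the fixed Hilbert spaces $H^m_{z_0}$ and $L_2(\Omega)$, each Fredholm by the standing hypothesis. For the analytic structure of the dispersion relation, enlarge the parameter space by the spectral variable and consider
\begin{equation*}
M(\lambda,z):=(L(z)-\lambda I)\circ\phi_z:H^m_{z_0}\longrightarrow L_2(\Omega).
\end{equation*}
Rellich's compact embedding $H^m(\Omega)\hookrightarrow L_2(\Omega)$ makes $\lambda I$ a compact perturbation of $L(z)$, so $M(\lambda,z)$ is an analytic Fredholm family on a neighborhood in $\C\times\mathcal{C}$ of the same index as $L(z)$. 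A Grushin/Schur-complement reduction then rewrites $\{\ker M(\lambda,z)\neq 0\}$ locally as the vanishing locus of $\det F(\lambda,z)$, where $F$ is an analytic matrix-valued function whose dimensions are the kernel and cokernel dimensions of $M$ at the base point. This exhibits $\mathcal{D}$ as a (locally) analytic subset of $\C\times\mathcal{C}$; when the Fredholm index vanishes the two dimensions agree, so $F$ is square and a single analytic defining function $f(\lambda,z)=\det F(\lambda,z)$ is obtained, giving the principal-ness assertion (this is precisely the form of the analytic-Fredholm-morphism statement systematized in \cite{ZKKP}).

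The main technical nuisance is the usual one for operator families living on variable domains: one must first recast everything in a fixed-domain formulation before the standard analytic Fredholm machinery applies, and this is exactly the purpose of the projector $P(z)$ and the trivialization $\phi_z$ constructed in the first step. Once that reduction is in hand, the remaining assertions are bookkeeping plus an appeal to the analytic-Fredholm-family theorem.
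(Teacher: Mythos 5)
Your argument is correct and follows the same route the paper takes by citation: the paper's ``proof'' of this proposition consists entirely of references to \cite{ZKKP,Shubin}, and your explicit constructions --- the projector $P(z)=I-(B(z)|_V)^{-1}B(z)$ obtained by inverting the boundary operator on a fixed complement of $\ker B(z_0)$, the local trivialization $\phi_z$, and the Grushin/Schur-complement reduction of the analytic Fredholm family to a finite matrix --- are precisely the content of the results cited there (in particular \cite[Theorem 4.11 and its Corollary]{ZKKP}). Two small caveats: for nonzero index your $F(\lambda,z)$ is rectangular, so the degeneracy set is the common zero set of its maximal minors rather than of a determinant (this still gives analyticity); and the Grushin reduction by itself only produces \emph{local} defining functions, so the claim that $\mathcal{D}$ is cut out by a \emph{single global} analytic $f(\lambda,z)$ in the index-zero case requires the patching argument (a second Cousin-type step) supplied by the corollary to \cite[Theorem 4.11]{ZKKP}, on which the paper itself also ultimately relies.
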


Indeed, local existence of an analytic projector is a simple exercise (see, e.g. \cite{ZKKP,Shubin}. The main notions and results concerning analytic Banach bundles and Fredholm morphisms that explain the rest of the first two claims can be found in \cite{ZKKP}. The last statement of the theorem follows from \cite[Theorem 4.11 and its Corollary]{ZKKP}.

We now show how domain variations fit into this scheme. Since analyticity is a local property, we will be looking at small shape variations only.
A convenient (although over-determined and often neglected) way to parameterize domain variations is by varying its natural embedding into the ambient space, as opposed to parameterizations by normal perturbations of the boundary. Thus, let $\Omega\in\R^n$ be a bounded domain with the smooth boundary $\Gamma$.
 \begin{figure}[ht!]
 \centering
  \includegraphics[scale=1]{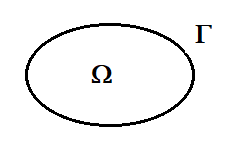}
\caption{A smooth domain $\Omega\subset \R^n$ and its boundary $\Gamma$.}\label{F:domain}
 %The smooth surface $\Gamma_1$ is at a positive distance from $\Gamma$. The smooth surface $\Gamma_2$ is in between. The annular domain $U$ is bounded by $\Gamma_1$ and $\Gamma$, and its part $V$ by $\Gamma_1$ and $\Gamma_2$.
\end{figure}
We denote by $I_\Omega$ its natural embedding into $\R^n$ and consider the Banach space $C^m(\Omega, \R^n)$ of $m$ times continuously (and uniformly) differentiable mappings from $\Omega$ to $\R^n$. Then $I_\Omega\in C^m(\Omega, \R^n)$.
Consider  $h\in C^m(\Omega, \R^n)$ such that $\|(h-I_\Omega)\|$ is sufficiently small and denote the corresponding small ball by $R\subset C^m(\Omega, \R^n)$. Then $h$ is still a diffeomorphic embedding of $\Omega$ into $\R^n$. The domains $\Omega_h:=h(\Omega)$ are ``small perturbations'' of domain $\Omega$, and their boundaries $\Gamma_h:=h(\Gamma)$ are small perturbations of $\Gamma$. Let us now define the operator $L(z,h)$ acting as $l(x,z,D)$ on the domain $\Omega_h$ with elliptic boundary conditions $B(z,h)u|_{\Gamma_h}=0$, where the boundary operators $B$ may depend analytically on $h$.

The diffeomorphisms $h$ enable us to pull-back the BVPs from the domains $\Omega_h$ back to $\Omega$, ending up with a new family of elliptic operators, which we will call $M(h,z)$. A simple calculation (see, e.g., \cite[bottom of page 20]{Henry}) shows that the coefficients of the operator $M$ depend \textbf{analytically}\footnote{This analyticity does not have anything to do with smoothness of the coefficients of the operator or the surface $\Gamma$.} on $h\in R$. Thus, the following result is just a corollary of the proposition \ref{P:bundle}:
\begin{theorem}\label{T:domainanalit}
The ``dispersion relation''
\begin{equation}
\mathcal{D}:=\{(\lambda,z,h)\in \C\times\mathcal{C}\times R\,|\, L(z,h)u=\lambda u \mbox{ has a non-zero solution}\}
\end{equation}
is an analytic subset in $\C\times\mathcal{C}\times R$. It is principal (i.e., is defined as the set of zeros of a single analytic function $f(\lambda, z,h)$) if the Fredholm index of the operator is equal to zero.
\end{theorem}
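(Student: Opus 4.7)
The plan is to reduce the assertion to Proposition \ref{P:bundle} by pulling every boundary value problem back to the fixed reference domain $\Omega$. Under the diffeomorphism $h:\Omega\to\Omega_h$, the eigenvalue problem $L(z,h)u=\lambda u$ on $\Omega_h$ with boundary condition $B(z,h)u|_{\Gamma_h}=0$ is equivalent (via the substitution $v:=u\circ h$) to $M(h,z)v=\lambda v$ on $\Omega$ together with the pulled-back elliptic boundary condition on $\Gamma$. Since $h$ is a diffeomorphism, $u\not\equiv 0$ iff $v\not\equiv 0$, so the set $\mathcal{D}$ is left unchanged by this change of variable. This reduces the statement to an analytic-parameter statement for a family of BVPs on the single fixed domain $\Omega$.

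Next, I would enlarge the parameter space in Proposition \ref{P:bundle} from $\mathcal{C}$ to $\mathcal{C}\times R$, treating $(z,h)$ as a single analytic parameter valued in the product of a complex analytic space and an open subset of the Banach space $C^m(\Omega,\R^n)$ (implicitly complexified in $h$, as is customary in Henry's framework). The decisive input, already flagged in the excerpt, is that the chain-rule computation produces coefficients of $M(h,z)$ (and of the pulled-back boundary operators) that are analytic functions of $h\in R$: the entries of $(Dh)^{-1}$ and the Jacobian determinant are rational expressions in the entries of $Dh$, hence analytic in $h$. Combined with the hypothesis of analyticity in $z$, this verifies exactly the structural hypotheses of Proposition \ref{P:bundle} for the enlarged parameter space. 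The proposition then yields an analytic subbundle $\mathcal{F}$ of $H^m$-functions satisfying the pulled-back boundary conditions, and $M(h,z)-\lambda$ is a Fredholm morphism from $\mathcal{F}$ to the trivial $L_2(\Omega)$-bundle over $\C\times\mathcal{C}\times R$. Its degeneracy locus is an analytic subset by the Zaidenberg--Krein--Kuchment--Pankov theory cited in the excerpt, and is principal exactly when the Fredholm index vanishes.

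The main obstacle is not conceptual but bookkeeping: one must check that Proposition \ref{P:bundle} genuinely extends to Banach-space-valued parameters such as $h\in R$, rather than only to the finite-dimensional $z\in\mathcal{C}\subset\C^l$ case that is most familiar. The footnote in the excerpt already signals this extension is available, and the local existence of the analytic projector $P(z,h)$ onto the kernel of the trace map $B(z,h)(\cdot)|_\Gamma$ in this Banach-parameter setting is the standard perturbation-of-projectors argument, so I would include a short remark confirming this and then invoke Proposition \ref{P:bundle} without further calculation.
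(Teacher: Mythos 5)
Your proposal is correct and follows essentially the same route as the paper: pull the boundary value problem back from $\Omega_h$ to the fixed domain $\Omega$ via $h$, observe (as in Henry's computation) that the coefficients of the resulting operator $M(h,z)$ depend analytically on $h$, and then apply Proposition \ref{P:bundle} with the enlarged parameter $(z,h)$, which the proposition already accommodates since it allows parameters in a domain of a complex Banach space. The paper treats the theorem as an immediate corollary of exactly this reduction, so no further commentary is needed.
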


One simple consequence of this result is:
\begin{corollary}\label{C;simpleeigenv}
If $\lambda$ is a simple eigenvalue of the operator $L(z_0)$, then it extends analytically to a simple eigenvalue $\lambda(z,h)$ of $L(z,h)$ for sufficiently small $|z-z_0|$ and $\|h-I_\Omega\|$.
\end{corollary}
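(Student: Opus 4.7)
The plan is to apply a Riesz spectral projector argument on top of Theorem \ref{T:domainanalit}. First I would pull the operators back to $\Omega$ via the diffeomorphism $h$ to obtain the analytic family $M(h,z)$ of elliptic operators on a fixed domain, and then trivialize the subbundle $\mathcal{F}$ of Proposition \ref{P:bundle} by composing with the local analytic projector, so that $M(h,z) - \mu I$ becomes an analytic family of Fredholm operators between fixed Banach spaces. This preparation is the first step, and it is what makes the subsequent spectral-projector calculus literally meaningful.

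Since $\lambda_0$ is isolated in $\sigma(L(z_0))$, choose a small positively oriented circle $\gamma\subset \C$ enclosing $\lambda_0$ and avoiding the rest of the spectrum. By Theorem \ref{T:domainanalit} the dispersion relation $\mathcal{D}$ is a closed analytic set whose fiber over $(z_0, I_\Omega)$ is discrete, so $\gamma$ remains disjoint from $\sigma(M(h,z))$ for all $(z,h)$ in some neighborhood $U$ of $(z_0, I_\Omega)$. On $U$ the resolvent $(M(h,z) - \mu I)^{-1}$ depends analytically on $(z,h,\mu)$ for $\mu\in\gamma$, so
\[
P(z,h):=\frac{1}{2\pi i}\oint_\gamma (M(h,z)-\mu I)^{-1}\,d\mu
\]
is an analytic family of projectors on $U$. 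Its rank $\operatorname{tr} P(z,h)$ is continuous and integer-valued, hence constant and equal to $1$ by simplicity at the base point. Consequently $\lambda(z,h):=\operatorname{tr}\bigl(M(h,z)P(z,h)\bigr)$ is analytic in $(z,h)$, equals $\lambda_0$ at $(z_0, I_\Omega)$, and is the unique, automatically simple, eigenvalue of $M(h,z)$ inside $\gamma$.

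The main obstacle is that $L(z,h)$ literally lives on a moving domain with moving boundary conditions, so the contour integral above has no intrinsic meaning until the two pullbacks described in the first paragraph are performed. Once they are, the argument reduces to the standard Kato-style perturbation theory for isolated eigenvalues of analytic Fredholm families, and no further subtlety intervenes. An alternative route, using only Theorem \ref{T:domainanalit} and bypassing the bundle trivialization, is to apply the analytic implicit function theorem to the principal defining function $f(\lambda,z,h)$ of $\mathcal{D}$: simplicity of $\lambda_0$ forces $\partial_\lambda f(\lambda_0, z_0, I_\Omega)\neq 0$, and the implicit function theorem on the Banach parameter space $\mathcal{C}\times R$ then produces the desired analytic $\lambda(z,h)$, with simplicity preserved by continuity of $\partial_\lambda f$.
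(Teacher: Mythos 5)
Your proposal is correct, but it is worth noting that the paper offers no argument at all for this corollary: it is stated as an immediate consequence of Theorem \ref{T:domainanalit}, and the intended derivation is closest to what you call the ``alternative route.'' Your primary route --- pulling back to the fixed domain, trivializing the subbundle of Proposition \ref{P:bundle}, and running the Riesz contour-integral calculus to get an analytic rank-one projector $P(z,h)$ and the eigenvalue $\lambda(z,h)=\operatorname{tr}\bigl(M(h,z)P(z,h)\bigr)$ --- is the standard Kato-type perturbation argument and is sound; strictly speaking it uses the analyticity of the family $M(h,z)$ (established before Theorem \ref{T:domainanalit}) rather than the analytic-set statement of the theorem itself, and it has the advantage of working without assuming the index is zero, i.e.\ without needing $\mathcal{D}$ to be principal. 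Your alternative route does use the theorem directly, but as written it has a small gap: from ``$\mathcal{D}$ is the zero set of a single analytic function $f$'' one cannot conclude that $\partial_\lambda f(\lambda_0,z_0,I_\Omega)\neq 0$ merely from simplicity of $\lambda_0$, since for instance $f^2$ defines the same zero set. One needs $f$ to be chosen so that its order of vanishing in $\lambda$ equals the algebraic multiplicity of the eigenvalue; the construction in \cite{ZKKP} (a determinant of a finite-dimensional reduction of the Fredholm family) does have this property, but that should be said explicitly. With that caveat either route completes the proof.
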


\begin{remark}\label{R:homot}
In particular, the results of Theorem \ref{T:domainanalit} and Corollary \ref{C;simpleeigenv} apply to the simplest case of a spherical domain of changing radius \cite{TItal}, or in fact to homothetic perturbation of a star-shaped domain.
\end{remark}

%%%%%%%%%%%%%%%%
\section{Perforated geometry}\label{S:perf}
%%%%%%%%%%%%%%%%%

Our main goal here is to consider a periodically perforated medium of the following kind: The domain of consideration is the space $\R^n$ with a $\Z^n$-periodic arrangement of non-overlapping smooth contractible bounded domains (``holes'') removed (see the shaded domain $W$ in  Fig. \ref{F:struct}).
\begin{figure}[ht!]
  \centering
  \includegraphics[scale=1.3]{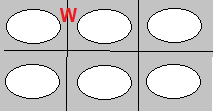}
\caption{The space $\R^n$ being periodically perforated by contractible smooth holes.}\label{F:struct}
\end{figure}
Consider in $W$ a periodic elliptic operator $L(x,D)$ of order $m$ with elliptic conditions on the boundary of $W$ (i.e., on the boundaries of the holes) and with sufficiently nice coefficients for the Fredholm property between the appropriate Sobolev spaces to hold. We are interested in its spectrum, and thus first of all in its dispersion relation (see, e.g. \cite{KuchBAMS,KuchBook} and references therein)
\begin{equation}\label{E:disp}
\mathcal{D}:=\{(k,\lambda)\in \C^n\times \C\,|\, \exists u\neq 0 \,\, \Z^n\mbox{-periodic, } L(x, D+k)u=\lambda u\}.
\end{equation}
It is well known (see \cite{KuchBAMS} and references therein) that the spectrum of the operator coincides with the projection of $\mathcal{D}$ on the $\lambda$-plane. Also, according to Proposition \ref{P:bundle}, it is an analytic set.

We are interested now in dependence of this picture on variations of the shape of the holes.

%%%%%%%%%%%%%%%%%%%%
\section{Analyticity of the extended dispersion relation}\label{S:Disp}
%%%%%%%%%%%%%%%%%

Since in (\ref{E:disp}) only $\Z^n$-periodic functions are of interest, we can concentrate on a single fundamental domain $\Omega$ folded into a torus, see Fig. \ref{F:cell} for such a fundamental domain).
\begin{figure}[ht!]
  \centering
  \includegraphics[scale=0.5]{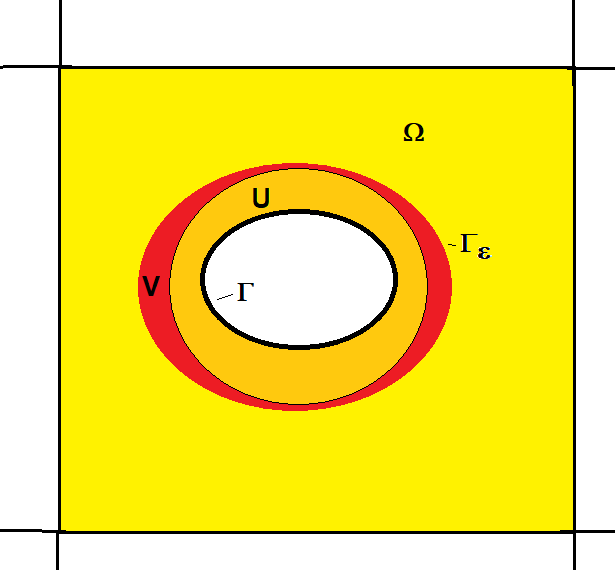}
\caption{The unit cell of the structure.} \label{F:cell}
\end{figure}

We consider a smooth surface $\Gamma_\epsilon$ near to the boundary $\Gamma$ and inside $\Omega$. We split the domain between $\Gamma$ and $\Gamma_\epsilon$ into two nested annular domains $V\subset U$ with another smooth surface approximating $\Gamma$ as a boundary of $V$.  We can  make variations of the hole's boundary $\Gamma$ by considering $C^m$-embeddings $h$ from the annular domain $U$ into $\R^n$ that are close to $I_U$ and coincide with $I_V$ on the sub-annulus $V$. These extend as identity to the whole domain $\Omega$. Using these diffeomorphisms, analogously to what was done in Section \ref{S:perturb}, one can rewrite the eigenvalue problem (\ref{E:disp}) on the $h$-modified domain as a problem on the original domain, but with a modified operator $L(k,D,h)$. As it was explained in Section \ref{S:perturb}, this elliptic operator depends analytically on $h$ and thus Proposition \ref{P:bundle} is applicable with the quasi-momenta $k$ playing the role of parameters $z$. Thus, one obtains the following result:

\begin{theorem}\label{T:disp}
The ``dispersion relation''
\begin{equation}
\mathcal{D}:=\{(\lambda,k,h)\in \C\times\mathcal{C}\times R\,|\,\exists\, u\neq 0 \,\Z^n\mbox{-periodic, } L(k,h)u=\lambda u\}
\end{equation}
is an analytic subset in $\C\times\C^n\times R$. It is principal (i.e., is defined as the set of zeros of a single analytic function $f(\lambda,k,h)$) if the Fredholm index of the operator is equal to zero.

In particular, if $\lambda$ is a simple eigenvalue of the operator $L(k_0)$, then it extends analytically to a simple eigenvalue $\lambda(k,h)$ of $L(k,h)$ for sufficiently small $\|k-k_0\|$ and $\|h-I_U\|$.
\end{theorem}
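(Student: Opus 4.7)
The plan is to reduce the statement to Proposition \ref{P:bundle} by treating the pair $(k,h)\in\mathcal{C}\times R$ as the abstract analytic parameter $z$ appearing there, with $\mathcal{C}$ a neighborhood of $k_0$ in $\C^n$. First I would perform the Floquet substitution $D\mapsto D+k$, which recasts the search for $\Z^n$-periodic solutions of $L(x,D)u=\lambda u$ on the perforated ambient space as the search for periodic solutions of $L(x,D+k)u=\lambda u$ on the single fundamental cell $\Omega$ folded into a torus (with the hole cut out). The shift is polynomial in $k$, so all coefficients of the resulting operator are entire-analytic in $k\in\C^n$ after this step.

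Next I would import the pull-back machinery of Section \ref{S:perturb}. Because each admissible diffeomorphism $h\in R$ coincides with $I_V$ on the inner sub-annulus $V$, it extends by the identity to a self-diffeomorphism of all of $\Omega$ that modifies only the hole boundary $\Gamma$; in particular it fixes the periodicity faces of $\Omega$. Pulling the boundary value problem on $\Omega_h$ back by this extension produces a new elliptic operator $M(k,h)$ on the \emph{fixed} domain $\Omega$, whose coefficients depend analytically on $h\in R$ by the standard chain-rule computation cited from \cite[p.~20]{Henry}, and polynomially (hence analytically) on $k$. The transplanted boundary operator on $\Gamma$ inherits the same analyticity, while the periodicity conditions on $\partial\Omega\setminus\Gamma$ are preserved verbatim because $h$ is the identity there.

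At this point one is in precisely the abstract situation covered by Proposition \ref{P:bundle}, with the role of $z$ played by $(k,h)\in\mathcal{C}\times R$ (locally complexifying the Banach parameter in $h$ as in the discussion of Theorem \ref{T:domainanalit}). The first two conclusions—that $\mathcal{D}$ is an analytic subset of $\C\times\C^n\times R$, and that it is principal whenever the Fredholm index of $L(k,h)$ vanishes—follow at once. For the last assertion I would invoke Corollary \ref{C;simpleeigenv}; equivalently, a simple eigenvalue of $L(k_0)$ corresponds to a simple zero in $\lambda$ of the defining function $f(\lambda,k,h)$, and the analytic implicit function theorem then resolves this zero as an analytic graph $\lambda=\lambda(k,h)$ in a neighborhood of $(k_0,I_U)$, with simplicity preserved on that neighborhood because the non-vanishing of $\partial_\lambda f$ is an open condition.

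The main (essentially the only) thing to verify carefully is the geometric compatibility of the annular cutoff construction: that a $C^m$-small perturbation of $\Gamma$, realized by an embedding of $U$ that is $I_V$ on $V$, indeed extends to a self-diffeomorphism of $\Omega$ (and thus of the torus) which touches the problem only through the hole's shape. Once this is noted, the theorem is a direct corollary of the abstract framework set up in Section \ref{S:perturb}, which is the whole point of the present approach.
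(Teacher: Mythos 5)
Your proposal is correct and follows essentially the same route as the paper: a Floquet shift $D\mapsto D+k$ on the fundamental cell folded into a torus, the annular cutoff construction with $h=I_V$ on $V$ extended by the identity to fix everything except the hole boundary, pull-back to a fixed domain giving an operator analytic in $(k,h)$, and then Proposition \ref{P:bundle} with $(k,h)$ in the role of $z$ (plus the implicit function theorem for the simple-eigenvalue claim, as in Corollary \ref{C;simpleeigenv}). No substantive differences from the paper's argument.
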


%%%%%%%%%%%%%%%%
\section{Absolute continuity of the spectrum}\label{S:ac}
%%%%%%%%%%%%%%%%%%%

We are now interested in the structure of the spectrum of the periodic elliptic BVP in periodically perforated $\R^n$ (see the previous sections for the exact description).
The standard proof (see \cite{RS,KuchBook}) applies that leads to the absence of singular continuous spectrum:
\begin{theorem}\label{T:nosc}
The singular continuous part of the spectrum of the periodic operator described above is empty.
\end{theorem}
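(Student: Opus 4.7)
The plan is to follow the by-now-classical Floquet--Bloch / direct integral argument, only checking that the ingredients required by it are already in place in our perforated setting.

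First I would apply the Gelfand transform associated with the $\Z^n$-action on the perforated space $W$. Because the perforation is $\Z^n$-periodic, this produces a unitary equivalence
\begin{equation}
L \;\cong\; \int^{\oplus}_{\T^n} L(k)\, dk ,
\end{equation}
where, after folding $W$ into a single fundamental cell $\Omega$ (a torus with one hole), the fiber $L(k)$ acts as $l(x,D+k)$ on functions on $\Omega$ satisfying the prescribed elliptic boundary condition on $\Gamma$. Each fiber $L(k)$ is an elliptic operator of order $m$ on a compact manifold with boundary, hence has compact resolvent and discrete spectrum $\lambda_1(k)\le\lambda_2(k)\le\dots$ accumulating only at $\infty$. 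This reduces the spectral analysis of $L$ to the study of the multi-valued ``band function'' $k\mapsto\{\lambda_j(k)\}$.

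Next I would invoke Theorem \ref{T:disp} (with the shape parameter $h$ frozen at $I_U$) to conclude that the dispersion relation
\begin{equation}
\mathcal{D}=\{(\lambda,k)\in\C\times\C^n : L(k)u=\lambda u \text{ has a nonzero periodic solution}\}
\end{equation}
is an analytic subset of $\C\times\C^n$, cut out locally by a single analytic function $f(\lambda,k)$. This is the crucial input: it implies that each band $\lambda_j(k)$ is a real analytic function of $k$ away from a proper real-analytic subvariety of the Brillouin zone (where bands may cross or branch), and in particular the level sets
\begin{equation}
\Sigma_j(\lambda)=\{k\in\T^n : \lambda_j(k)=\lambda\}
\end{equation}
are either the whole Brillouin zone (``flat band'') or have Lebesgue measure zero.

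The conclusion then follows from the standard measure-theoretic argument for direct integrals (see \cite{RS,KuchBook}): the spectral measure of $L$ is, up to a countable sum, the push-forward of Lebesgue measure on $\T^n$ under the real analytic maps $\lambda_j$. Such a push-forward is automatically a sum of an absolutely continuous measure (coming from the pieces where $\nabla\lambda_j\neq 0$) and, at most, a pure point part supported on the critical values and on possible flat bands. No singular continuous component can arise, since an analytic function on $\T^n$ that is neither locally constant nor has a.e.\ nonvanishing gradient does not exist. Summing over $j$ gives the statement.

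The only step that requires care, and is the main (mild) obstacle, is checking that the Gelfand/direct integral decomposition is carried out in a way compatible with the boundary conditions on the holes, so that the fibers really coincide with the operators $L(k)$ to which Theorem \ref{T:disp} applies; once this identification is made, the rest is the same real-analytic fibration argument as in the coefficient-periodic case treated in \cite{RS,KuchBook}.
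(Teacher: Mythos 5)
Your proposal is correct and is essentially the paper's own proof: the paper simply states that ``the standard proof (see \cite{RS,KuchBook}) applies,'' and what you have written out --- the Gelfand/direct-integral decomposition into fibers with compact resolvent, the analyticity of the band functions coming from Proposition \ref{P:bundle}/Theorem \ref{T:disp}, and the measure-theoretic conclusion that push-forwards of Lebesgue measure under real analytic band functions have no singular continuous part --- is precisely that standard argument, including the one point needing care (compatibility of the decomposition with the Dirichlet conditions on the holes).
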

Thus, in proving the absolute continuity of the spectrum the only hurdle is to exclude the possibility of existence of the pure point part of the spectrum. In the generality considered above (an arbitrary order periodic elliptic operator) the statement that the spectrum is absolutely continuous would be incorrect, as the well known examples of elliptic operators of higher order with compactly supported eigenfunctions \cite{plis}(see also usage of this result in the periodic situation in \cite{KuchBAMS,KuchBook}). So, the only realistic option is to restrict ourselves to operators of second order. So, we consider now the Schr\"odinger operator
\begin{equation}\label{E:schr}
-\Delta+V(x),
\end{equation}
where $V(x)$ is a real $\Z^n$-periodic bounded measurable potential. We also impose zero \textbf{Dirichlet conditions} on the boundaries of the perforations.

\begin{theorem}\label{T:ac}
The spectrum of the periodic operator (\ref{E:schr}) with Dirichlet conditions in the perforated domain is absolutely continuous.
\end{theorem}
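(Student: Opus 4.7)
The plan is to combine Floquet--Bloch decomposition, the analyticity provided by Theorem \ref{T:disp}, and a Thomas--Combes-type exclusion of embedded eigenvalues. First, I would invoke the standard Floquet--Bloch machinery: $-\Delta + V$ on the perforated $\R^n$ with Dirichlet conditions on the holes is unitarily equivalent to a direct integral $\int^{\oplus} L(k)\,dk$ over the Brillouin torus, with $L(k) := (-i\nabla + k)^2 + V(x)$ acting on $L^2$ of the perforated fundamental cell under Dirichlet conditions on the hole boundary and the usual periodic conditions across the cell walls. For real $k$ each $L(k)$ is self-adjoint with compact resolvent, so the spectrum decomposes into band functions $\lambda_j(k)$ and equals $\bigcup_j \lambda_j(\T^n)$. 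By Theorem \ref{T:nosc} absolute continuity can fail only through the pure point part, i.e.\ only if some real $\lambda_0$ is an eigenvalue of $L(k)$ on a subset of the real Brillouin torus of positive Lebesgue measure.

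Next I would promote any such positive-measure set of eigenvalue quasi-momenta to all of $\C^n$. Because the second-order Dirichlet Schr\"odinger BVP has Fredholm index zero, Theorem \ref{T:disp} supplies a single analytic function $f(\lambda,k)$ on $\C\times\C^n$ whose zero locus is the dispersion relation. If $f(\lambda_0,\cdot)$ vanishes on a positive-measure subset of the real torus, then by real-analyticity it vanishes throughout $\R^n$, and by analytic continuation throughout $\C^n$. Hence $\lambda_0$ would have to be an eigenvalue of $L(k)$ for \emph{every} complex quasi-momentum $k$.

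To rule this out I would deploy a Thomas--Combes estimate. Take $k := k_0 + it\,e_1$, with $k_{0,1}$ at positive distance from the dual lattice (for instance the half-period $k_{0,1}=\pi$), and let $t\to+\infty$. On the unperforated torus the free operator $L_0(k)-\lambda_0$ has Fourier symbol $(\ell+k)^2-\lambda_0$, whose imaginary part $2t(\ell_1+k_{0,1})$ is bounded below in absolute value by a positive constant times $t$ uniformly in $\ell$; hence $L_0(k)-\lambda_0$ is invertible on $L^2(\T^n)$ with resolvent norm $O(1/t)$, and the bounded perturbation $V$ is absorbed by a Neumann series. So on the full torus there is no nontrivial eigenfunction at $\lambda_0$ once $t$ is large enough.

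The main obstacle is transferring this estimate to the \emph{perforated} cell, where direct Fourier expansion is blocked by the Dirichlet boundary condition on the holes. The device I would use is extension by zero: any $u \in D(L(k))$ extends to $\tilde u \in H^1(\T^n)$, and a routine integration by parts shows that $(L_0(k)-\lambda_0)\tilde u = (\partial_\nu u)\,\delta_{\partial H}$ in the distributional sense on the whole torus, the right-hand side being a surface density supported on the hole boundary. Combining the Thomas lower bound on the symbol of $L_0(k)-\lambda_0$, read on appropriately chosen negative-order Sobolev spaces to accommodate the surface distribution, with elliptic regularity and trace estimates for $\partial_\nu u$ on $\partial H$, one forces $\|u\|_{L^2}\to 0$ as $t\to\infty$, contradicting persistence of a nontrivial eigenfunction at $\lambda_0$. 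The delicate point is the calibration of Sobolev indices so that the $1/t$ gain supplied by the Thomas symbol estimate dominates the $k$-dependent cost of elliptic regularity; this bookkeeping has been carried out for closely related periodic Dirichlet problems in \cite{KuchBook,KuchBAMS} and transfers to our perforated geometry essentially unchanged.
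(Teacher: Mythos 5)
Your overall strategy---Floquet--Bloch reduction, analytic continuation of a flat band to all complex quasi-momenta, and a Thomas-type lower bound at $k=k_0+it e_1$ with $k_{0,1}=\pi$---is exactly the paper's skeleton, and everything up to and including the symbol estimate on the unperforated torus is sound. The gap is in your last paragraph, which is precisely where the perforation makes the problem nontrivial. You extend $u$ by zero to $\tilde u\in H^1(\T^n)$ and then apply the \emph{second-order} operator $L_0(k)-\lambda_0$ to $\tilde u$, which produces the surface distribution $(\partial_\nu u)\,\delta_{\partial H}$; to close the argument you would have to show that the $O(1/t)$ resolvent gain, transplanted to negative-order Sobolev spaces, beats the $k$-dependent constants in the elliptic regularity and trace estimates controlling $\partial_\nu u$. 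You do not carry out this calibration, and the assertion that it ``has been carried out for closely related periodic Dirichlet problems in \cite{KuchBook,KuchBAMS} and transfers essentially unchanged'' is not supported: those references treat the unperforated case, and the resolvent bound you quote is only an $L^2\to L^2$ bound ($|(\ell+k)^2-\lambda_0|\gtrsim t$ uniformly in $\ell$), which by itself says nothing about the mapping properties on the $H^{-s}$ spaces where $\delta_{\partial H}$ lives. As written, the decisive estimate is missing.

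The paper avoids this difficulty entirely by never applying the second-order operator to the zero extension. Writing $k=a+ib$, it splits
\begin{equation*}
\Bigl(\tfrac{1}{i}\tfrac{\partial}{\partial x}+k\Bigr)^2 = A+B,\qquad
A=-\Delta+|a|^2-|b|^2-2ia\cdot\tfrac{\partial}{\partial x},\qquad
B=2ib\cdot\Bigl(a+i\tfrac{\partial}{\partial x}\Bigr),
\end{equation*}
where $A$ is symmetric and $B$ is skew-symmetric and first order; a direct computation for $u\in H^2_0(\widetilde W)$ gives $\|Au+Bu\|^2=\|Au\|^2+\|Bu\|^2\ge\|Bu\|^2$, so one only needs a lower bound on the \emph{first-order} part $B$. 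Since $\tilde u\in H^1(\T^n)$, the function $B\tilde u$ is an honest $L^2$ function on the whole torus (no boundary terms arise), and Parseval immediately yields $\|Bu\|_{L^2(\widetilde W)}=\|B\tilde u\|_{L^2(\T^n)}\ge 2\beta\pi\|u\|_{L^2}$ for $\alpha=\pi$. If you want to salvage your route, the fix is to adopt this splitting: discard the part of the operator that would generate the surface distribution rather than trying to estimate it.
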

\begin{proof}
Let us write the operator as
\begin{equation}
\left(\frac{1}{i}\frac{\partial}{\partial x}\right)^2+V(x)
\end{equation}
acting in the natural way in $L_2(W)$ with the domain $H^2_0(W)$. The spectral parameter $\lambda$ can be absorbed by the potential, and thus it is sufficient to consider the case when $\lambda=0$.

The standard Floquet theory (see \cite{KuchBAMS,KuchBook,Skri} and references therein) reduces this operator to the direct integral over the Brillouin zone of the quasimomenta of the operators
\begin{equation}
L(k):=\left(\frac{1}{i}\frac{\partial}{\partial x}+k\right)^2+V(x)
\end{equation}
acting on $\Z^n$-periodic functions on $W$, i.e. on functions on the sub-domain $\widetilde{W}:=W/\Z^n$ of the torus $\T^n=\R^n/\Z^n$. Here $k\in \R^n$ is the quasi-momentum. As it is standard, the family of the Fredholm operators $L(k): H^2_0(\widetilde{W})\mapsto L_2(\widetilde{W})$ extends analytically to the whole $\C^n$. Again, as the standard L.~Thomas' argument shows (see \cite{Thomas,RS,KuchBAMS}), existence of the point $\lambda=0$ in the pure point spectrum of $L$ is equivalent to the following:\\
\textbf{Pure point spectrum condition}: \emph{For any $k\in\C^n$ there exists a non-trivial function $u\in H^2_0(\widetilde{W})$ such that}
\begin{equation}\label{E:test}
L(k)u=0.
\end{equation}

Like in most of the known proofs of absolute continuity, we will try to find a \textbf{complex} value of the quasimomentum
$$k=a+ib,\, a,b\in\R^n$$
for which the equation (\ref{E:test}) has no non-trivial (periodic) solutions.

We rewrite (\ref{E:test}) as follows:
\begin{equation}\label{E:test2}
\left(\frac{1}{i}\frac{\partial}{\partial x}+k\right)^2 u= -V(x) u.
\end{equation}
The $L_2$-norm of the right hand side, due to boundedness of the potential can be estimated for any $u\in L_2(\widetilde{W})$ as follows:
$$
\|Vu\|^2_{L_2}\leq C \|u\|^2_{L_2}.
$$
If now we find a value of $k$ such that for any non-zero $u$ one has
\begin{equation}\label{E:kab}
\|\left(\frac{1}{i}\frac{\partial}{\partial x}+k\right)^2 u\|^2_{L_2} > C\|u\|^2_{L_2},
\end{equation}
the theorem will be proven.

So far, everything went along the standard L.~Thomas' proof. Finding a complex value of $k$ such that (\ref{E:kab}) holds also resembles the case of non-perforated domain, with a little additional caveat, due to $\widetilde{W}$ being only a part of the torus, rather than the whole torus.

So, let $k=a+ib$ and rewrite the differential expression in the left hand side of (\ref{E:kab}) as follows:
\begin{equation}
\left(-\Delta+|a|^2-|b|^2-2ia\cdot \frac{\partial}{\partial x}\right) +2ib\cdot\left(a+i\frac{\partial}{\partial x}\right).
\end{equation}
Let us denote the expression in the first parentheses by $A$ and in the second as $B$, so
in the left hand side of (\ref{E:kab}) is $Au+Bu$. A direct computation of inner products in $L_2(\widetilde{W})$ shows that
\begin{equation}
\|Au+Bu\|^2_{L_2}=\|Au\|^2_{L_2}+\|Bu\|^2_{L_2}.
\end{equation}
The formal reason is that $A$ and $B$ commute, $A$ is symmetric, and $B$ is skew-symmetric. However, the direct calculation works nicely for $u\in H^2_0(\widetilde{W})$, avoiding composition of $A$ and $B$.

So, it is sufficient for a given constant $C>0$ to choose $a,b\in\R^n$ in such a way that
\begin{equation}\label{E;suffic}
\|Bu\|^2_{L_2}>C \|u\|^2_{L_2}
\end{equation}
for any non-zero $u\in H^2_0(\widetilde{W}))$.

Now let us extend $u$ as a function $\tilde{u}$ on the whole torus $\T^n$, setting it equal to zero outside of $\widetilde{W}$. The resulting function is not in $H^2$ on the torus anymore, but is still in $H^1$, due to the Dirichlet condition imposed. Let us expand $\tilde{u}$ into Fourier series:
\begin{equation}
\tilde{u}(x)=\sum\limits_{m\in \Z^n} u_m e^{2\pi ix\cdot m}.
\end{equation}
Let us choose $a=(\alpha,0,\dots,0), b=(\beta,0,\dots,0)$. Then
\begin{equation}\label{E:fourier}
Bu=B\tilde{u}|_{\widetilde{W}}= \sum\limits_{m\in \Z^n} 2i\beta(\alpha-2\pi m_1)u_m e^{ix\cdot m}.
\end{equation}
Choosing now $\alpha=\pi$ and $\beta>C/6$, one sees that the absolute values of all multipliers in the Fourier series (\ref{E:fourier}) exceed $C$, which implies (\ref{E;suffic}), and hence (\ref{E:kab}). This finishes the proof of the theorem. \end{proof}

\section{Remarks and conclusions}\label{S:remarks}
%%%%%%%%%%%%%%
\begin{enumerate}
\item The geometry considered in this article seems to be restricted by two secretly made assumptions: the group of periods being $\Z^n$ (changing variables would lead the Laplacian to look somewhat different), as well as by the perforations being holes in the cells of a cubic tiling of the space. This seems to be excluding perforations like the one in Fig. \ref{F:strange}. In fact, the proof carries out without difficulty to arbitrary Brave lattice and any ``strange'' shape of the fundamental domain instead of a cube. In particular, in the main consideration, any fundamental domain, not necessarily a cube, would fold onto the torus and thus the proof carries through.
\begin{figure}
  \centering
  \includegraphics[scale=0.5]{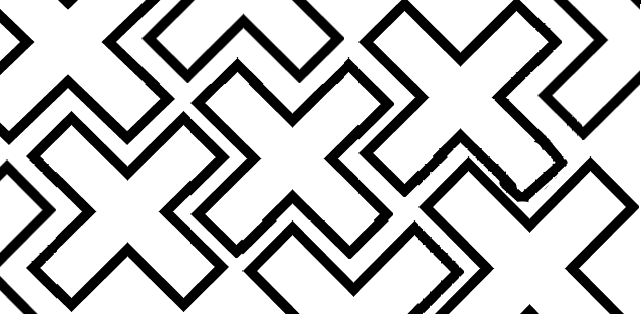}
  \caption{}\label{F:strange}
\end{figure}
\item It was mentioned in \cite{KuchBAMS} that the ``extended'' dispersion relation, i.e. including for instance potentials, is also analytic. This, however did not include the domain shape variations. We show that these can also be included, which in particular implies analyticity of an isolated eigenvalue. However, the global analyticity of the dispersion relation carries more information, including also its structure near non-smooth points as well. It is also known to be important for other issues of spectral theory, see e.g. \cite{Colin,Gerard,Gies,Karp,Knor,KuchBAMS,KuchBook,KuchVain1,KucVainb}.
\item
One expects that the statement of Theorem \ref{T:ac} about absolute continuity of the spectrum holds for any second order periodic elliptic operator with sufficiently ``nice'' coefficients, instead of just the Schr\"odinger operator (\ref{E:schr}). This is known to be not a simple problem even without perforations (see \cite{BirSus,Fr,KuchBAMS,KuchBook,KuLev,Morame,Shen,Sobolev,SobWal,Thomas}). The famous work by Friedlander \cite{Fr} proves absolute continuity under an additional symmetry condition that the operator should be even with respect to one of the coordinates corresponding to the periodicity axes. This proof applies without any modification to the perforated domain case, if the symmetry condition is imposed on the shape of the perforation as well:
\begin{theorem}\label{T:fr}
Let $(x_1,...,x_n)$ be the coordinates in $\R^n$, such that the natural generators of the group of periods $\Z^n$ act by shifting the appropriate coordinates.

Suppose that the periodic 2nd order elliptic operator $L$ in the perforate space, equipped with Dirichlet boundary conditions, as well as the shape of the perforation are symmetric w.r.t. the transformation $(x_1,\dots,x_n) \mapsto (-x_1,\dots,x_n)$. Then the spectrum is absolutely continuous.
\end{theorem}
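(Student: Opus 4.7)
The plan is to run the Thomas-type scheme from the proof of Theorem \ref{T:ac}, replacing the elementary Fourier step by Friedlander's symmetry-based resolvent estimate \cite{Fr}. The reflection $\sigma\colon(x_1,\dots,x_n)\mapsto(-x_1,\dots,x_n)$ is compatible with $\Z^n$ and descends to an involution of the torus $\T^n$; by hypothesis it preserves the sub-domain $\widetilde{W}$, its boundary and the Dirichlet condition, together with all the coefficients of $L$. The resulting intertwining $U_\sigma L(k_1,k')U_\sigma^{-1}=L(-k_1,k')$, where $U_\sigma$ is the unitary pull-back and $k=(k_1,k')$ with $k'\in\C^{n-1}$, is exactly the structural input Friedlander exploits.

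First, I would carry out the Floquet--Bloch reduction exactly as in Section \ref{S:ac}, reducing absolute continuity to finding, for each real $\lambda$, some complex $k\in\C^n$ with $\ker(L(k)-\lambda)=\{0\}$. Next, fixing the transverse quasi-momentum $k'\in\R^{n-1}$ and letting $k_1\in\C$ vary, I would invoke Friedlander's argument verbatim: decompose any putative kernel element into its $\sigma$-even and $\sigma$-odd parts and run his sesquilinear-form identity on each parity component. That identity is built out of integrations by parts against $H^1_0$ test functions on a $\sigma$-invariant domain, so every manipulation makes sense verbatim on $\widetilde{W}$: boundary contributions on $\partial\widetilde{W}$ vanish by the Dirichlet condition, and the reflection generates no new boundary terms because $\partial\widetilde{W}$ is itself $\sigma$-invariant. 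The conclusion, as in \cite{Fr}, is that the set of $k_1\in\C$ for which $L(k_1,k')$ has a non-trivial kernel is discrete, and any $k_1$ outside this set delivers the desired Thomas-type obstruction to point spectrum.

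The only hurdle I anticipate, and it is essentially cosmetic, is verifying the geometric compatibility of the perforation with the symmetry: one should check that $\partial\widetilde{W}$ either misses the $\sigma$-fixed hyperplanes or meets them transversely, so that Friedlander's ``half-cell'' folding (with the attendant mixed Dirichlet/symmetry conditions on the folded domain) is well posed, and so that the density and trace statements for $H^1_0(\widetilde{W})$ restricted to the half-cell remain valid. Under the hypothesis that the shape of the perforation is itself $\sigma$-symmetric with smooth boundary, this check is automatic, and no substantive modification of Friedlander's proof is required---which is precisely the claim of the theorem.
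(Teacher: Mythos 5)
Your proposal is correct and follows exactly the route the paper takes: the paper offers no separate argument for Theorem \ref{T:fr} beyond the assertion that Friedlander's proof \cite{Fr} applies verbatim once the perforation shape is assumed symmetric, which is precisely what you carry out (and in somewhat more detail, e.g.\ the intertwining relation and the check that the $\sigma$-invariant Dirichlet boundary produces no new terms).
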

%The proof follows the one in \cite{Fr} without any change, so we do not provide it here.
\end{enumerate}

\section*{Acknowledgments}
P.K. acknowledges support of the NSF DMS grants \# 1517938 and \# 2007408. JT was partially supported by a research grant from the Faculty of Science of the University of Helsinki. Thanks also got to Professors L.~Friedlander, M.~Lanza de Cristoforis, and P.~Musolino for discussions

\end{document}